\renewcommand\footnotetextcopyrightpermission[1]{} 
  \providecommand\BibTeX{{%
    \normalfont B\kern-0.5em{\scshape i\kern-0.25em b}\kern-0.8em\TeX}}}
\newcolumntype{L}[1]{>{\raggedright\let\newline\\\arraybackslash\hspace{0pt}}m{#1}}
\newcolumntype{C}[1]{>{\centering\let\newline\\\arraybackslash\hspace{0pt}}m{#1}}
\newcolumntype{R}[1]{>{\raggedleft\let\newline\\\arraybackslash\hspace{0pt}}m{#1}}
\newcolumntype{R}{>{\raggedleft\arraybackslash}X}
\newcolumntype{C}{>{\centering\arraybackslash}X}
\begin{document}

\title{Image Hashing by Minimizing Discrete Component-wise Wasserstein Distance}



\author{Khoa D. Doan}
\email{khoadoan@vt.edu}
\affiliation{%
  \institution{Virginia Tech}
  \city{Arlington}
  \country{USA}
}

\author{Saurav Manchanda}
\email{manch043@umn.edu}
\affiliation{%
  \institution{University of Minnesota}
  \city{Twin-Cities}
  \country{USA}}

\author{Sarkhan	Badirli}
\email{s.badirli@gmail.com}
\affiliation{%
  \institution{Purdue University}
  \city{Indiana}
  \country{USA}
}

\author{Chandan K. Reddy}
\email{reddy@cs.vt.edu}
\affiliation{%
  \institution{Virginia Tech}
  \city{Arlington}
  \country{USA}
}





\begin{abstract}
Image hashing is one of the fundamental problems that demand both efficient and effective solutions for various practical scenarios. Adversarial autoencoders are shown to be able to implicitly learn a robust, locality-preserving hash function that generates balanced and high-quality hash codes. However, the existing adversarial hashing methods are inefficient to be employed for large-scale image retrieval applications. Specifically, they require an exponential number of samples to be able to generate optimal hash codes and a significantly high computational cost to train. In this paper, we show that the high sample-complexity requirement often results in sub-optimal retrieval performance of the adversarial hashing methods. To address this challenge, we propose a new adversarial-autoencoder hashing approach that has a much lower sample requirement and computational cost. Specifically, by exploiting the desired properties of the hash function in the low-dimensional, discrete space, our method efficiently estimates a better variant of Wasserstein distance by averaging a set of easy-to-compute one-dimensional Wasserstein distances. The resulting hashing approach has an order-of-magnitude better sample complexity, thus better generalization property, compared to the other adversarial hashing methods. In addition, the computational cost is significantly reduced using our approach. We conduct experiments on several real-world datasets and show that the proposed method outperforms the competing hashing methods, achieving up to 10\% improvement over the current state-of-the-art image hashing methods. The code accompanying this paper is available on Github\footnote{\url{https://github.com/khoadoan/adversarial-hashing}}. 

\end{abstract}

\begin{CCSXML}
<ccs2012>
<concept>
<concept_id>10002951.10003317.10003371.10003386.10003387</concept_id>
<concept_desc>Information systems~Image search</concept_desc>
<concept_significance>500</concept_significance>
</concept>
<concept>
<concept_id>10002951.10003317</concept_id>
<concept_desc>Information systems~Information retrieval</concept_desc>
<concept_significance>500</concept_significance>
</concept>
<concept>
<concept_id>10002951.10003317.10003318</concept_id>
<concept>
<concept_id>10002951.10003317.10003318.10003321</concept_id>
<concept_desc>Information systems~Content analysis and feature selection</concept_desc>
<concept_significance>300</concept_significance>
</concept>
<concept>
<concept_id>10003752.10010070.10010071.10010261.10010276</concept_id>
<concept_desc>Theory of computation~Adversarial learning</concept_desc>
<concept_significance>500</concept_significance>
</concept>
<concept>
<concept_id>10010147.10010257.10010293.10010294</concept_id>
<concept_desc>Computing methodologies~Neural networks</concept_desc>
<concept_significance>500</concept_significance>
</concept>
<concept>
<concept_id>10010147.10010257.10010258.10010260</concept_id>
<concept_desc>Computing methodologies~Unsupervised learning</concept_desc>
<concept_significance>100</concept_significance>
</concept>
</ccs2012>
\end{CCSXML}

\ccsdesc[500]{Information systems~Image search}
\ccsdesc[500]{Information systems~Information retrieval}
\ccsdesc[300]{Information systems~Content analysis and feature selection}
\ccsdesc[500]{Theory of computation~Adversarial learning}
\ccsdesc[500]{Computing methodologies~Neural networks}
\ccsdesc[100]{Computing methodologies~Unsupervised learning}

\keywords{Hashing, neural networks, adversarial autoencoders, optimal transport, wasserstein distance}


\maketitle

\section{Introduction}

The rapid growth of visual data, especially images, brings many challenges to the problem of finding similar items. {Exact similarity search}, which aims to exhaustively find all relevant images, is often impractical due to its computational complexity. This is due to the fact that a complete linear scan of all the images in such massive databases is not feasible, especially when the database contains millions (or billions) of items. Hashing is an approximate similarity search method which provides a principled approach for web-scale databases. In hashing, high-dimensional data points are projected onto a much smaller locality-preserving \textit{binary} space via a hash function $f: x \rightarrow \{0, 1\}^m$, where $m$ is the dimension of the binary space. \textit{Approximate search for similar images can be efficiently performed} in this binary space using Hamming distance  \cite{leskovec2014mining}. Furthermore, the compact binary codes are storage-efficient. 


The existing hashing methods can be broadly grouped into supervised and unsupervised hashing.  Although supervised hashing offers a superior performance, unsupervised hashing is more suitable for large databases because it learns the hash function without any labeled data. One of the widely used unsupervised hashing techniques is Locality Sensitive Hashing (LSH)~\cite{leskovec2014mining}. Image hashing can also be sub-categorized as shallow hashing methods, such as Spectral Hashing (SpecHash)~\cite{weiss2009spectral} and Iterative Quantization (ITQ)~\cite{gong2013iterative}, and deep hashing methods, such as SSDH and DistillHash~\cite{yang2018semantic,yang2019distillhash,ghasedi2018unsupervised}. 


Even though existing methods have shown some reasonable performance improvements in several image-hashing applications, they have two main drawbacks: (1) their objective functions are heuristically constructed without a principled characterization of the goodness of the hash codes, and (2) the gap between the desired, discrete solution and the relaxed, continuous solution is minimized heuristically with explicit constraints. The latter increases the time to tune the additional hyperparameters of the models.  The authors of \cite{doan2019adversarial,doan2020efficient} show that employing adversarial autoencoders for hashing avoids these explicitly constructed constraints. Their hashing model implicitly learns the hash functions by adversarially match the hash functions' output with a target, supposedly optimal discrete prior. This removes the need for time-consuming hyperparameter tuning. Training adversarial autoencoders by minimizing the Jensen-Shannon or Wasserstein distance is, however, difficult, especially when the dimension of the latent space increases. The scaling difficulty of the adversarial autoencoders may be related to one fundamental issue: the generalization property of matching distributions. The authors of~\cite{arora2017generalization} show that Jensen-Shannon divergence and Wasserstein distance do not generalize, in a sense that the generated distribution cannot converge to the target distribution without an exponential number of samples. \textit{Without good generalization, the retrieval performance can be sub-optimal}.

To address these aforementioned challenges, we propose a novel unsupervised Discrete Component-wise Wasserstein Autoencoder (DCW-AE) model for the image hashing problem. The proposed model implicitly learns the optimal hash function using a novel and efficient divergence minimization framework. The main contributions of the paper are as follows:

\begin{itemize}[leftmargin=*]
    \item Demonstrate that the ability to match the distribution of the output of the learned hash function to the target discrete distribution matching is closely related to the retrieval performance. Specifically, employing a distance with an easier convergence to the target distribution (called generalization) results in better retrieval performance. To this end, the existing Wasserstein-based Adversarial Autoencoders have poor generalization; thus they have a sub-optimal retrieval performance.  
    \item Propose a novel, efficient approach to learn the hash functions by employing a more generalizable variant of the Wasserstein distance, that leverages the discrete properties of hashing. It has an order-of-magnitude better generalization property and an order-of-magnitude more efficient computation than the existing Wasserstein-based hashing methods.
    
    \item Demonstrate the superiority of the proposed model over the state-of-the-art hashing techniques on various widely used real-world datasets using both quantitative and qualitative performance analysis.
\end{itemize}

The rest of the paper is organized as follows. We discuss the related work in Section \ref{relatedworks}. In Section \ref{proposed_method}, we describe the details of the proposed method. Finally, we present quantitative and qualitative experimental results in Section \ref{experiments} and conclude our discussion in Section \ref{conclusions}.

\section{Related Work} \label{relatedworks}

In this section, we begin by discussing the related work in the image hashing domain, with main focus towards the motivation behind adversarial autoencoders. Then, we continue our discussion on  adversarial learning and their limitations, especially on their generalization property when matching to the target distribution (or sample complexity requirement).

\subsection{Image Hashing}

Various supervised and unsupervised methods have been developed for hashing. Examples of supervised hashing methods include~\cite{shen2015supervised,yang2018supervised,ge2014graph,xia2014supervised,cao2018hashgan} and examples of unsupervised hashing methods include~\cite{huang2017unsupervised,lin2016learning,huang2016unsupervised,do2016learning,he2013k,heo2012spherical,gong2013iterative,weiss2009spectral,salakhutdinov2009semantic,yang2019distillhash,dizaji2018unsupervised}. While supervised methods demonstrate a superior performance over unsupervised ones, they require human-annotated datasets. Annotating massive-scale datasets, which are common in the image hashing domain, is an expensive and tedious task. Furthermore, besides the train/test distribution-mismatch problem, supervised methods easily get stuck in bad local optima when labeled data are limited. Thus, exploring the unsupervised hashing techniques is of great interest, especially in the image-hashing domain. 

Hashing methods can also be categorized as either data independent or dependent. One of the most popular data-independent hashing technique is LSH~\cite{leskovec2014mining}. Data-dependent hashing includes popular methods such as SpecHash~\cite{weiss2009spectral} and ITQ~\cite{gong2013iterative}. Data-dependent hashing demonstrates a significant increase in retrieval performance because it considers the data distribution. Hashing methods can also be categorized as shallow~\cite{gionis1999similarity,weiss2009spectral,heo2012spherical} and deep hashing~\cite{do2016learning,lin2016learning,huang2017unsupervised,yang2019distillhash,dizaji2018unsupervised}. The deep hashing methods can learn non-linear hash functions and have shown superiority over the shallow approaches. 

In general, the existing hashing methods learn the hash functions by minimizing the following training objective:
\begin{equation}
    \min_{f} E_{x \sim D_x} L(x, f(x)) + E_{x \sim D_x} \sum_{k} \lambda_k \times H_k(f(x))
\end{equation}

where $D_x$ is the data distribution, $L(x, f(x))$ is the locality-preserving loss of the hash function $f(x)$ and $H_k(f(x))$ is a hashing constraint with $\lambda_k$ as its corresponding weight (thus a hyperparameter to tune). 

The authors of \cite{doan2019adversarial,doan2020efficient} show that by matching the latent space of the autoencoder with an optimal discrete prior, we can implicitly learn a good hash function $f$ while simultaneously satisfying the constraints $H_k$. However, their adversarial methods are unstable in practice and do not show a good generalization property. In Section~\ref{experiments}, we will show that poor generalization results in a sub-optimal retrieval performance when the model is trained with stochastic optimization techniques such as Stochastic Gradient Descent (SGD).

\subsection{Adversarial Learning}
\begin{figure*}[!hbpt]
\centering
\includegraphics[width=6in]{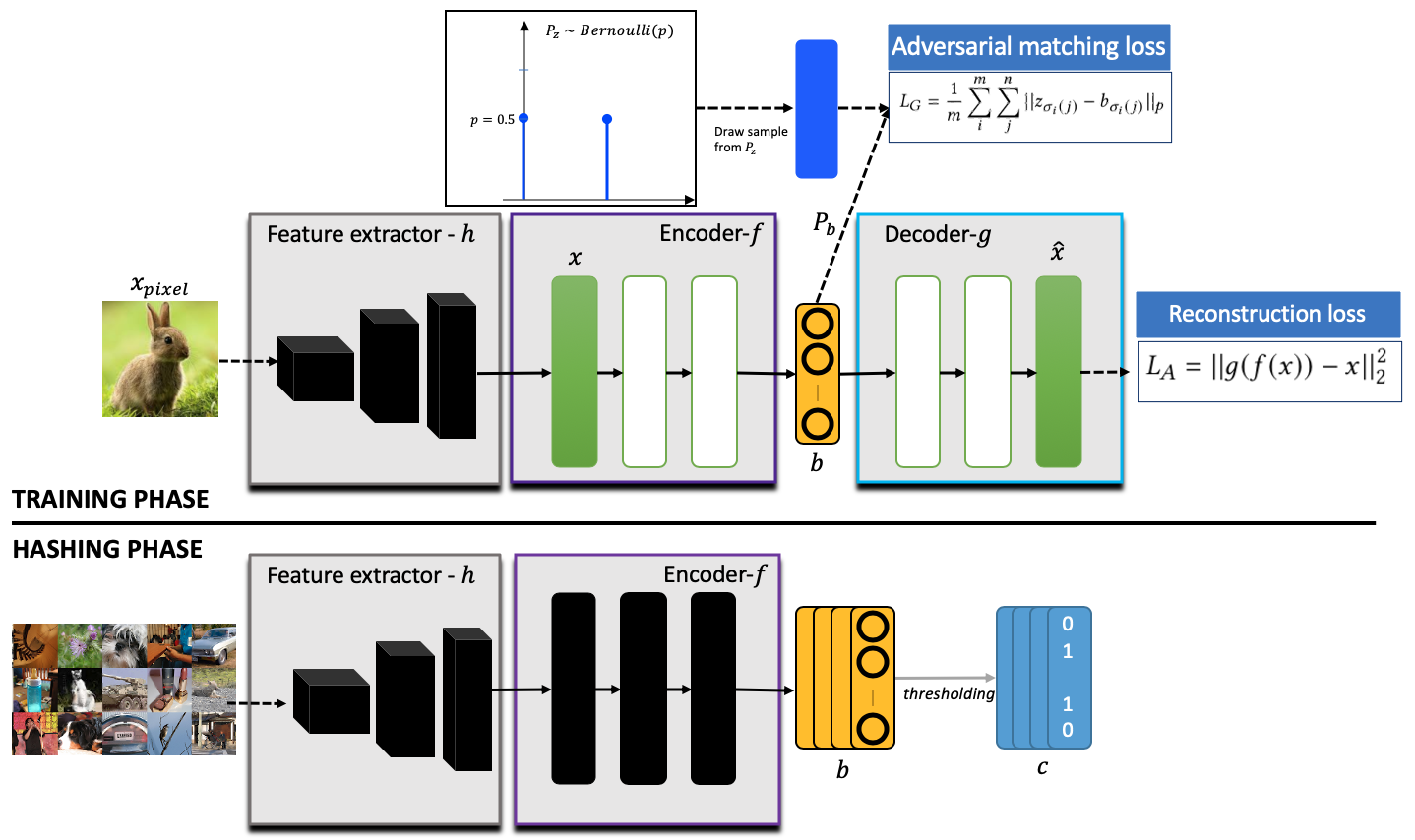}
\caption{The network architecture of the proposed DCW-AE model. During the training phase, the parameters of $f$ and $g$ are trained, while the parameters of $h$ are either fixed or trained. During the hashing phase, the vector $b$ is thresholded to obtain the hash code $c$.}
\label{fig:IDAE-network}
\end{figure*}

Generative Adversarial Network (GAN) has recently gained popularity due to its ability to generate realistic samples from the data distribution~\cite{goodfellow2014generative}. A prominent feature of GAN is its ability to ``implicitly'' \textit{match} output of a deep network to a pre-defined distribution using the adversarial training procedure. Furthermore, adversarial learning has been leveraged to regularize the latent space, as it helps in learning the intrinsic manifold of the data~\cite{makhzani2015adversarial,doan2019adversarial}. For example, the adversarially trained autoencoders can learn a smooth manifold of the data in the low-dimensional latent space~\cite{makhzani2015adversarial}.  However, training the adversarial autoencoders remains challenging and inefficient because of the alternating-optimization procedure (minimax game)  between the generator and the discriminator. For instance, the work presented in \cite{doan2019adversarial} employs the original minimax game~\cite{goodfellow2014generative}, which suffers from mode-collapse and vanishing gradient~\cite{arjovsky2017wasserstein}. Moreover, in the minimax optimization, the generator's loss fluctuates during training instead of ``descending'', making it extremely challenging to know when to stop training the model. 

Wasserstein-based adversarial methods overcomes a few of these limitations (specifically, mode-collapse and vanishing gradient)~\cite{arjovsky2017wasserstein,tolstikhin2017wasserstein}. However, because they employ the Kantorovich-Rubinstein dual, the minimax game still exists between the generator and the critic. On the other hand, the work in \cite{iohara2019generative,doan2020efficient} directly estimates the Wasserstein distance by solving the Optimal Transport (OT) problem. Solving the OT has two main challenges. Firstly, its computation cost is $O(N^{2.5} log(Nd))$ where $N$ is the number of data points and $d$ is the dimension of the data points. This is expensive. Secondly, the OT-estimate of the Wasserstein distance requires an exponential number of samples to generalize (or to achieve a good estimate of the distance)~\cite{arora2017generalization}. In practice, both the high-computational cost and exponential-sample requirement make the OT-based adversarial methods very inefficient. 


\sloppy Sliced Wasserstein Distance (SWD), on the other hand, approximates the Wasserstein distance by averaging the one-dimensional Wasserstein distances of the data points when they are projected onto many random, one-dimensional directions ~\cite{deshpande2018generative}. SWD is more generalizable than the OT, with polynomial sample complexity~\cite{deshpande2019max}. Furthermore, the SWD estimate has a computational cost of $O(N_{\omega}N \log(Nd))$, hence its  computational complexity is better than that of the OT \textit{only when $N_{\omega}$ is smaller than $N^{1.5}$}. Nevertheless, in the high dimensional space, it becomes very likely that many random, one-dimensional directions do not lie on the manifold of the data. In other words, along several of these directions, the projected distances are close to zero. Consequently, in practice, the number of random directions $N_{\omega}$ is often larger than $N^{1.5}$. For example, in~\cite{deshpande2018generative}, for a mini-batch size of $64$, SWD needs $N_{\omega}=10,000$ projections, which is significantly larger than $64^{1.5}$, to generate good visual images. To address this problem, Max-SWD finds the best direction and estimates the Wasserstein distance along this direction~\cite{deshpande2019max}. 

In this paper, we address the limitations of these GAN-based approaches by robustly and efficiently minimizing a novel variant of the Wasserstein distance. By carefully studying the properties of the target distribution in hashing, the proposed adversarial hashing method significantly more efficient than both the existing OT-based and SWD-based approaches.
 

\section{Proposed Method} \label{proposed_method}
\subsection{Problem statement}

Given a data set $X = \{x^{(1)},x^{(2)},...,x^{(N)}\}$ of $N$ images where $x^{(i)} \in R^d$, the goal of unsupervised hashing is to learn a hash function $f: x \rightarrow b$ that can generate binary hash code $b \in \{0, 1\}^m$ of the image $x$. $m$ denotes the length of the hash code $b$ and it is typically much smaller than $d$.

\begin{table}[!h]
\caption{Notations used in this paper.}
\begin{tabular}{|p{0.5in}|p{2.5in}|}
\hline
    \textbf{Notation}  & \textbf{Description} \\ \hline \hline
    $x_{pixel}, x$  	& the pixel vector  and the extracted feature vector of the image, respectively \\ \hline
    $h, f, g$  	& feature extractor, encoder, and decoder, respectively. \\ \hline
    $W_h, W_f, W_g$  	& parameters of the feature extractor, encoder, and decoder, respectively. \\ \hline
    $d$ & dimension of the data \\ \hline
    $m$ & dimension of the discrete space \\ \hline
    $c, b$ & discrete code and its continuous representation. \\ \hline
    $W, \hat{W}$ & Wasserstein distance and its empirical estimate, respectively \\ \hline
    $z$ & sample from the discrete prior $P_z$ \\ \hline
    $P_z, P_b$ & distributions of $z$ and $b$, respectively \\ \hline
    $\mathcal{D}, \mathcal{F}$ & empirical samples of $P_z$ and $P_b$, respectively. \\ \hline
    $N$ & number of samples. \\ \hline
    $\omega_k$ & vector that defines the projection onto a random one-dimensional direction. \\ \hline
    $N_{\omega}$ & number of random projections \\ \hline
    $L_A$ & autoencoder's reconstruction loss \\ \hline
    $L_G$ & adversarial matching loss (also called the distributional distance). \\ \hline
    
\end{tabular}
\label{tbl:notations}
\end{table}

\subsection{Network architecture}


We propose the DCW-AE network. Figure~\ref{fig:IDAE-network} shows the architecture of DCW-AE. Similar to the existing image hashing approaches~\cite{yang2019distillhash,yang2018semantic}, we choose to employ a feature extractor, such as the VGG network~\cite{simonyan2014very}, and represent an image by its extracted feature vector $x$. In particular, the feature extractor is defined as the function $h: x_{pixel} \rightarrow x$, where $x_{pixel}$ is the pixel-representation of the image. Note that, $h$ can be trained in an end-to-end framework similar to that of ~\cite{doan2019adversarial}. However, in our paper, we choose to use the pretrained VGG-feature extractor $h$ and do not retrain its parameters. 


The encoder, represented by the function $f: x \rightarrow b$, computes the low-dimensional representation $b$. Given the feature vector $x$ of an image, the output $b = f(x)$ is represented by the $m$ independent probabilities $b_{i} = p(c_{i}=1 | x, W_f)$, where $W_f$ is the parameter of the encoder. To generate the hash codes, we simply compute $c_{i} = \textbf{1}_{[b_{i} > 0.5]}$. Note that the encoder $f$ is also the hash function for our purpose. We then regularize the posterior distribution of $b$, called $P_b$ with a predefined, discrete prior $P_z$ by minimizing their distributional distance $L_G$. The decoder, represented by the function $g: b \rightarrow x$, reconstructs the input, denoted as $\hat{x}$. We train our model by minimizing the reconstruction loss $L_A$.

In the following sections, we will discuss the details of the proposed method, especially the novel, alternative formulation of the Wasserstein distance calculation which significantly improves the retrieval results of the existing adversarial autoencoders. 

\subsection{Locality preservation of the hash codes}

The autoencoder is trained to minimize the mean-squared error between the input and the reconstructed output, as below:
\begin{equation}
    L_A = \frac{1}{N} \sum_j^{N} ||\hat{x}^{(j)} - x^{(j)}||_2^2 = \frac{1}{N} \sum_j^{N} ||g(f(x^{(j)})) - x^{(j)}||_2^2
    \label{eqn:reconstruction_loss}
\end{equation}

It is easy to show that minimizing the reconstruction loss $L_A$ is equivalent to preserving locality information of the data in the original input space~\cite{dai2017stochastic}. In other words, the proposed autoencoder model learns the hash function $f$ that preserves the original input locality. Our approach to preserve the locality of the input in the discrete space using an autoencoder is different from the approaches taken by SSDH~\cite{yang2018semantic} and DistillHash~\cite{yang2019distillhash}. These methods heuristically constructs the semantic, pairwise similarity matrix from the representation $x$. Consequently, the retrieval performance closely depends on the quality of the representation $x$ and the constructed similarity matrix. As we shall see in Section~\ref{experiments}, when a good feature extractor $h$ is not available, our approach significantly outperforms these methods.




\subsection{Implicit optimal hash function learning} \label{sec:ot_ae}

We regularize the the encoder's output $b$ to match a pre-defined binary prior. Specifically, we sample a vector $z$ as the real data. Each component of $z$ is independently and identically sampled from a one-dimensional Bernoulli distribution with a parameter $p$. The sampling procedure defines a distribution $P_z$ over $z$ while the encoder defines a distribution $P_b$ over the latent space $b$. The encoder, which is the generator in the GAN game, learns its parameters $W_f$ by minimizing the Wasserstein distance as follows:

\begin{equation}
    W(P_b, P_z) = \inf_{\gamma \in \Pi(P_b, P_z)} \int_{(b,z) \sim \gamma} p(b,z) d(b,z) db dz
\end{equation}

where $\Pi(P_b, P_z)$ is the set of all possible joint distributions of $b$ and $z$ whose marginals are $P_b$ and $P_z$, respectively, and $d(b, z)$ is the cost of transporting one unit of mass from $b$ to $z$. 


Given a finite, $N$-sample $\mathcal{F}=\{b^{(1)},b^{(2)},...,b^{(N)}\}$ from $P_b$ and a finite, $N$-sample $\mathcal{D}=\{z^{(1)},z^{(2)},...,z^{(N)}\}$ from $P_z$, one approach is to minimize the empirical Optimal Transport (OT) cost as follows:

\begin{equation}
\hat{W}(\mathcal{D}, \mathcal{F}) = \min_{W_f} \sum_i^N \sum_j^N M_{ij} d(b^{(i)}, z^{(j)}) = \min_{W_f} M \odot D,
\label{eqn:primal_empirical}
\end{equation}
where $M$ is the assignment matrix, $D$ is the cost matrix where $D_{ij}=d(b^{(i)}, z^{(j)})$ and $\odot$ is the Hadamard product. This Linear Programming (LP) program has the following constraints:

\begin{align}
\sum_i^N M_{ij} = 1, \forall j=1,...,N\\
\sum_j^N M_{ij} = 1, \forall i=1,...,N\\
M_{ij} \in \{0, 1\}, \forall i=1,...,N, \forall j=1,...,N
\end{align}





The best method of solving the OT's LP program has a cost of approximately $O(N^{2.5}\log(Nd))$~\cite{burkard2009assignment}, where $N$ is the number of examples. While it is entirely possible to implement this LP program in a Stochastic Gradient Descent (SGD) training for small mini-batch sizes~\footnote{https://github.com/gatagat/lap}, it is computationally expensive for larger $N$. As discussed in Section~\ref{relatedworks}, the OT has an exponential sample complexity. This requirement makes the OT less suitable in practice, where large mini-batches are necessary for the models to perform well.

\subsection{Discrete Component-wise Wasserstein Distance}~\label{sec:dcw_ae}
\begin{figure}[!htbp]
 \centering
 \renewcommand{\thesubfigure}{a}
 \subfloat[$P_b$ is far from $P_z$. \label{subfig:random_vs_component_wise_projections_a}]{%
   \includegraphics[width=0.23 \textwidth]{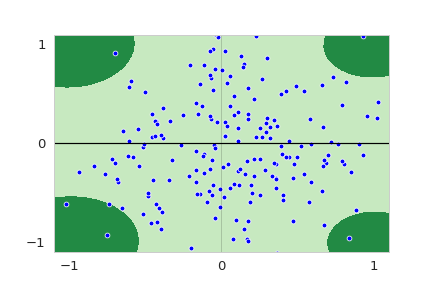}
 }
 \hfill
 \renewcommand{\thesubfigure}{b}
 \subfloat[1-D $\hat{W}$ along different projection angles (in degrees) when $P_b$ is far from $P_z$. \label{subfig:random_vs_component_wise_projections_b}]{%
   \includegraphics[width=0.23 \textwidth]{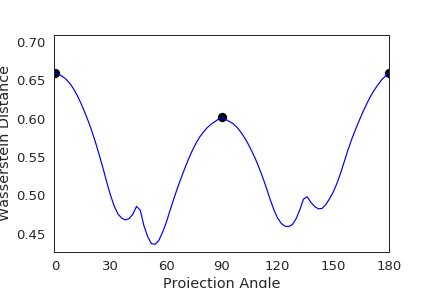}
 }
 \vfill
  \renewcommand{\thesubfigure}{c}
 \subfloat[$P_b$ is close to $P_z$.  \label{subfig:random_vs_component_wise_projections_c}]{%
   \includegraphics[width=0.23 \textwidth]{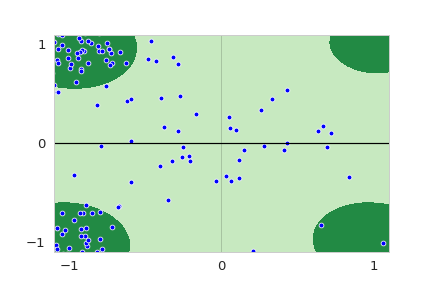}
 }
 \hfill
 \renewcommand{\thesubfigure}{d}
 \subfloat[1-D $\hat{W}$ along different projection angles (in degrees) when $P_b$ is close to $P_z$. \label{subfig:random_vs_component_wise_projections_d}]{%
   \includegraphics[width=0.23 \textwidth]{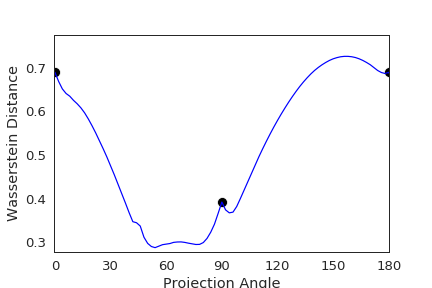}
 }
 \
 \caption{The hash output $b$ (fake, scattered points) versus the prior, target hash codes $z$ (real, part of the corner clusters) in the 2-D discrete space. In Figures~\ref{subfig:random_vs_component_wise_projections_a} and~\ref{subfig:random_vs_component_wise_projections_c} , we want to push the data points closer to the corners. Clearly, projections onto the axes (black vertical and horizontal lines) result in the most correction. In Figures~\ref{subfig:random_vs_component_wise_projections_b} and~\ref{subfig:random_vs_component_wise_projections_d}, we can clearly see this effect.}
 \label{fig:random_vs_component_wise_projections}
\end{figure}

In our experiments, we observe that it is difficult to match $P_b$ to $P_z$ by solving the OT. We conjecture that the reason is because the OT is a poor estimate of the Wasserstein distance. It has high variance when the number of samples $N$ in the mini-batches is small~\cite{iohara2019generative}. On the other hand, SWD is a more ``generalizable'' distance estimate than the OT~\cite{deshpande2019max}. Generalization refers to the number of samples the algorithm needs to converge to the target distribution.


The empirical SWD of two samples $\mathcal{D}$ and $\mathcal{F}$ is estimated as follows:

\begin{equation}
    \hat{W}(\mathcal{D}, \mathcal{F}) = \frac{1}{|N_{\omega}|} \sum_{k=1}^{N_{\omega}} W(\mathcal{D} \omega_k, \mathcal{F} \omega_k)
\end{equation}

where $\omega_k \in R^d$ is a vector which defines the projection onto a random, one-dimensional direction and $N_{\omega}$ is the number of such random projections. While SWD has a better sample complexity than the OT, it needs a high number of random directions. 

In hashing, and especially in matching the latent space of $b$ to the target, discrete prior $P_z$, each sample $z$ of $P_z$ lies at the vertices of the hypercube. Without loss of generality, assume that $z$ is two dimensional, therefore in $\{-1,1\}^2$ (this is similar to sampling $z \in \{0,1\}^2$ where the only difference is the activation function \textit{tanh} instead of \textit{sigmoid} after the logit output of the encoder $f$). A discrete sample $z$ falls into one of the four possible corners, as seen in Figures~\ref{subfig:random_vs_component_wise_projections_a} and~\ref{subfig:random_vs_component_wise_projections_c}. Figures~\ref{subfig:random_vs_component_wise_projections_b} and~\ref{subfig:random_vs_component_wise_projections_d} show the one-dimensional Wasserstein distances for different directions at different angles from the vector $(1, 0)$. In Figure~\ref{subfig:random_vs_component_wise_projections_b}, when the generated data points $b$ are further away from the corners, the projections onto the directions along the axes (those with angles $45r^{\circ}$ for different integer values $r$) have the most distances, thus best describe the separation between the samples of $b$ and $z$. The projections onto other directions will underestimate this separation between the samples of $z$ and $b$. In Figure~\ref{subfig:random_vs_component_wise_projections_d}, when more data points are closer to the corners, the projections onto $0^{\circ}$ or $180^{\circ}$-degree direction still well describe the most distances.  In other words, if we project the data points onto these axes and average the one-dimensional Wasserstein distances along these axes, the resulting Wasserstein distance is a better distance estimate compared to the random projections. Therefore, this motivates us to estimate the Wasserstein distance by averaging the distances along each dimension or $b$ and $z$, as follows:

\vspace{-10pt}
\begin{align}
    L_G = \hat{W}(\mathcal{D}, \mathcal{F}) = \frac{1}{m} \sum_{i}^{m} \hat{W}(\mathcal{D}_i, \mathcal{F}_i)
    \label{eqn:idr_loss1}
\end{align}

where $\mathcal{D}_i = \{z_i^{(1)}, z_i^{(2)},...,z_i^{(N)}\}$ and $\mathcal{F}_i = \{b_i^{(1)}, b_i^{(2)},...,b_i^{(N)}\}$. Solving the OT in the one-dimensional space has a significantly small computational cost~\cite{deshpande2018generative}. The cost of such operation is equivalent to the one-dimensional array sort plus the distance calculation. Therefore, the Wasserstein-p distance $L_G$ can be calculated as follows:

\vspace{-5pt}
\begin{align}
    L_G = \frac{1}{m} \sum_{i}^{m} \sum_{j}^{N} ||z^{(\sigma_i(j))} - b^{(\sigma_i(j))}||_p
    \label{eqn:idr_loss}
\end{align}
where $\sigma_i$ is the sorting operation applied to dimension $i$, and $z^{(\sigma_i(j))}$ and $b^{(\sigma_i(j))}$ are the ranked $j^{th}$ values of the sets $\mathcal{D}_i$ and $\mathcal{F}_i$, respectively. The cost of solving the proposed Wasserstein distance estimate is $O(Nmlog(Nm))$. Since $m$ is fixed and smaller than $N$, this is an order of magnitude faster than the high-dimensional OT's computational cost of $O(N^{2.5}\log(Nm))$. We can also show that the proposed $L_G$ calculation is a valid distance measure and its sample complexity is polynomial.

\begin{theorem}
The proposed Wasserstein-p calculation is a valid distance and it has a polynomial sample complexity.
\end{theorem}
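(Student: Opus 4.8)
The plan is to prove the statement in two independent parts, matching its two assertions: that the population quantity $L_G(P_b,P_z)=\frac{1}{m}\sum_{i=1}^{m}W_p(P_{b_i},P_{z_i})$, where $P_{b_i},P_{z_i}$ denote the $i$-th coordinate marginals, is a metric; and that its plug-in estimator $\hat{L}_G$ (the quantity actually computed in eqn.~\eqref{eqn:idr_loss}) converges to $L_G$ at a rate $\mathrm{poly}(1/\epsilon)$ with \emph{no} dependence on the code length $m$ --- in sharp contrast to the $\exp(m)$-sample behavior of the high-dimensional OT estimate.

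For the metric claim I would start from the classical fact that on $R$ the Wasserstein-$p$ distance $W_p$ is a genuine metric on probability measures with finite $p$-th moment. Every coordinate marginal of $P_b$ is supported on the bounded interval $[0,1]$ (or $[-1,1]$ after $\tanh$) and every marginal of $P_z$ is Bernoulli, so all moments are finite and each $W_p(P_{b_i},P_{z_i})$ is well defined. Non-negativity and symmetry of $L_G$ are inherited termwise from $W_p$; the triangle inequality follows by writing $W_p(P_{b_i},P_{\rho_i})\le W_p(P_{b_i},P_{z_i})+W_p(P_{z_i},P_{\rho_i})$ for each $i$, summing, and dividing by $m$ (a finite convex combination of per-coordinate metrics is a metric on the product space). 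The delicate axiom is identity of indiscernibles: $L_G(P_b,P_z)=0$ forces $P_{b_i}=P_{z_i}$ for every $i$, which pins down all one-dimensional marginals but not the joint law. I would therefore state the metric property on the class of measures over $R^m$ that factorize into their coordinate marginals --- which contains the prior $P_z$ by construction and is the natural domain of the matching objective --- and note that $L_G$ is a pseudometric in general.

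For the sample-complexity claim I would reduce everything to one dimension. Because one-dimensional optimal transport with a convex cost is solved by the monotone rearrangement, the sort-and-pair formula of eqn.~\eqref{eqn:idr_loss} equals $\hat{W}_p(\mathcal{D}_i,\mathcal{F}_i)=W_p(\widehat{P}_{z_i},\widehat{P}_{b_i})$, where $\widehat{P}$ is the $N$-sample empirical measure. Applying the triangle inequality twice, coordinate-wise, gives $|\hat{W}_p(\mathcal{D}_i,\mathcal{F}_i)-W_p(P_{z_i},P_{b_i})|\le W_p(\widehat{P}_{z_i},P_{z_i})+W_p(\widehat{P}_{b_i},P_{b_i})$, hence
\begin{equation}
\big|\hat{L}_G-L_G\big|\ \le\ \frac{1}{m}\sum_{i=1}^{m}\Big(W_p(\widehat{P}_{z_i},P_{z_i})+W_p(\widehat{P}_{b_i},P_{b_i})\Big).
\end{equation}
Now I invoke the standard one-dimensional empirical Wasserstein rate: for a compactly supported $\mu$ on $R$, $E\,W_p(\widehat{\mu}_N,\mu)\le C_p N^{-1/(2p)}$ (for $p=1$ this is $O(N^{-1/2})$, valid whenever $\int_{R}\sqrt{F(t)(1-F(t))}\,dt<\infty$), which applies here because the $m$ scalar sequences $\{z^{(j)}_i\}_j$ and $\{b^{(j)}_i=f(x^{(j)})_i\}_j$ are each i.i.d. (the prior draws and the image feature vectors are). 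Taking expectations term by term, the $\frac{1}{m}\sum_i$ leaves a bound \emph{independent of $m$}: $E|\hat{L}_G-L_G|\le 2C_p N^{-1/(2p)}$, so $N=O(\epsilon^{-2p})$ samples suffice for error $\epsilon$, whereas the $m$-dimensional OT estimate needs $N=\Omega(\epsilon^{-m})$ by~\cite{arora2017generalization}. A bounded-differences argument (each of the $2N$ samples perturbs $\hat{L}_G$ by $O(N^{-1/p})$) further upgrades this to concentration of $\hat{L}_G$ about its mean with high probability, completing the polynomial-sample-complexity claim.

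The part I expect to be the main obstacle is the identity-of-indiscernibles point: because $L_G$ averages $W_p$ over only the $m$ coordinate directions --- unlike the sliced Wasserstein distance, whose metric property relies on injectivity of the full Radon transform over all directions --- it is only a pseudometric on general joint laws, and the "valid distance" statement must be read on the product-measure regime that the hashing application actually lives in. The sample-complexity half is comparatively routine once the correct one-dimensional rate is quoted; the only care needed is checking the (trivially satisfied) moment/support hypotheses and observing that the $m$-fold coordinate average dampens, rather than amplifies, the per-coordinate estimation error.
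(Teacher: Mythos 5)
Your proof is essentially correct and takes a genuinely different, more self-contained route than the paper's. The paper proves the theorem in one line by observing that choosing the projection vectors $\omega_k$ to be the $m$ one-hot coordinate vectors turns the component-wise estimate into an instance of the sliced Wasserstein distance, and then citing~\cite{deshpande2019max} for both the metric property and the polynomial sample complexity. You instead verify the metric axioms directly and derive the sample complexity from the classical one-dimensional empirical Wasserstein rate $E\,W_p(\widehat{\mu}_N,\mu)\le C_p N^{-1/(2p)}$, averaged over coordinates. Your route buys two things the paper's does not. First, an explicit, $m$-independent rate $N=O(\epsilon^{-2p})$ rather than a bare appeal to ``polynomial complexity.'' Second, and more importantly, your identity-of-indiscernibles caveat exposes a real gap in the paper's argument: the reduction to SWD is not faithful on this point, because SWD's status as a true metric rests on injectivity of the Radon transform over \emph{all} directions, whereas restricting to the $m$ coordinate directions only recovers the coordinate marginals. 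Two distinct joint laws with identical marginals have $L_G=0$, so the quantity is only a pseudometric on general distributions; your restriction to the product-measure class (which contains the i.i.d.\ Bernoulli prior $P_z$) is the correct way to salvage the ``valid distance'' claim, and the paper's proof silently elides this. The sample-complexity half of your argument matches the paper's in spirit --- both ultimately rest on one-dimensional Wasserstein estimation --- but yours is the more rigorous of the two.
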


\begin{proof}
Let $\omega_k \in R^m$ be the one-hot vector whose component $k$ is 1. We can see that $\mathcal{D}\omega_k$ and $\mathcal{F}\omega_k$ are projections of the data matrices onto each dimension $k$. By setting $N_{\omega}=m$, this is exactly the formulation of SWD. Thus, by a similar proof in~\cite{deshpande2019max}, it is trivial to show that the proposed Wasserstein estimate is a valid distance and has a polynomial complexity.
\end{proof}



\SetKwProg{function}{}{}{}
\begin{algorithm}[!htb]
\LinesNumberedHidden
\KwIn{
    \text{Training data} $X$, \newline
    \text{Discrete, latent code size } $m$, \newline
    \text{Number of training iterations $K$}. \newline
    Number of reconstruction steps per one adversarial matching step $l$. \newline
    \text{Learning rate $\eta$}.
    }
\KwOut{$\{W_f\}$ parameters of the encoder} 

\For{number of training iterations $K$}
    {
    	    \For{number of reconstruction steps $l$} {
                Sample a minibatch of $N$ examples $\{x_{pixel}^{(1)},...,x_{pixel}^{(N)}\}$.  \newline
                Compute the feature vectors $x^{(j)} = h(x_{pixel}^{(j)})$ for $j=\{1,...,N\}$. \newline
                Compute $L_A = ||g(f(x)) - x||_2^2$. \newline
                Update $W_f \leftarrow W_f - \eta \nabla_{W_f} L_A$. \newline 
                Update $W_g \leftarrow W_g - \eta \nabla_{W_g} L_A$. \newline 
            } 
        	
            Sample a minibatch of $N$ examples $\{x_{pixel}^{(1)},...,x_{pixel}^{(N)}\}$.  \newline
            Compute the feature vectors $x^{(j)} = h(x_{pixel}^{(j)})$ for $j=\{1,...,N\}$. \newline
        	Sample $N$ vectors $\{z^{(1)},...,z^{(N)}\}$ where $z^{(i)} \sim P_z$. \newline
            \function {Compute $L_G$} {
                \For{each dimension $i$} {
                        $L_G(i) = \hat{W}(\{z_i^{(1)}, z_i^{(2)},...,z_i^{(N)}\}, \{b_i^{(1)}, b_i^{(2)},...,b_i^{(N)}\})$
                } 
            Set $L_G = \frac{1}{m}\sum_i^m L_G(i)$ 
            }

            Update $W_f \leftarrow  W_f - \eta \nabla_{W_g} L_G$.  
    }
\caption{\textbf{DCW-AE} Model Training}
\label{alg:IDR}
\end{algorithm}

Unlike SWD which employs a large number of random directions, the proposed estimate uses the directions that best separate the generated $b$ and the real data $z$. Similar to SWD, estimating the Wasserstein distance from the $m$ one-dimensional projections has a polynomial sample complexity. This is an important advantage over the OT estimation, which has an exponential sample complexity. 

The proposed estimate is also related to Max-SWD~\cite{deshpande2019max}. Max-SWD finds the single best projected dimension that best describes the separation of the two samples, by employing the discriminator that classifies real and fake data points. This results in the problematic minimax game. Our proposed calculation can estimate the similar averaged distance without the discriminator. For example, in  Figure~\ref{subfig:random_vs_component_wise_projections_b}, we can show that, given the optimal discriminator, Max-SWD finds the single direction whose distance would be the average of the distances along the $0^{\circ}$-direction and $90^{\circ}$-direction. Our approach will also estimate the same average, but without using the discriminator. 




We call the adversarial autoencoder with the proposed loss calculation \textbf{D}iscrete \textbf{C}omponent-wise \textbf{W}asserstein \textbf{A}uto\textbf{E}ncoder (DCW-AE). The objective function of the DCW-AE can be written as follows:


\begin{align}
	L = L_{A} + L_{G}
    \label{eqn:loss_overall}
\end{align}


We summarize the training algorithm of DCW-AE in Algorithm~\ref{alg:IDR}. While we can have a single minimization step on $L$, we find that alternatively minimizing $L_A$ and $L_G$ works better in practice. (similar to Alternating Least Squares approach traditionally used in matrix factorization). Specifically, we minimize $L_A$ for a few steps $l$ on every minimization step of $L_G$. In all of our experiments, we set $l=5$. Note that this is not a minimax game because $L_A$ and $L_G$ are different losses and are not related through a divergence or a value function, as in WGAN~\cite{arjovsky2017wasserstein} and Jensen-Shannon GAN~\cite{goodfellow2014generative}.

\section{Experiments} \label{experiments}

In this section, we present the experimental results to demonstrate the effectiveness of the proposed hashing method over the existing adversarial autoencoders and other hashing methods.

\subsection{Datasets Used} \label{datasets}
We utilize the following datasets in our performance evaluation experiments:

\begin{table*}[!t]
\caption{Performance comparison of different methods using P@1000. The best P@1000 value for each experiment is in bold.}
\begin{tabularx}{\linewidth}{|l|C|C|C|C|C|C|C|C|}
\hline
\multirow{2}{*}{Method} &
\multicolumn{2}{c|}{\textbf{MNIST}} 
              &
\multicolumn{2}{c|}{\textbf{CIFAR10}}                   & \multicolumn{2}{c|}{\textbf{FLICKR25K}}                 & \multicolumn{2}{c|}{\textbf{PLACE365}}                  \\ \cline{2-9} 
                    & \textbf{$m=32$} & \textbf{$m=64$} & \textbf{$m=32$} & \textbf{$m=64$} &
                    \textbf{$m=32$} & \textbf{$m=64$} &
                    \textbf{$m=32$} & \textbf{$m=64$}\\ \hline
\textbf{LSH}            & 0.3141 & 0.4306 & 0.1721           & 0.1731           & 0.0131           & 0.0188           & 0.0058           & 0.0101                       \\ \hline
\textbf{SpecHash}       & 0.4416 & 0.5042 & 0.1995           & 0.1982                      & 0.0213           & 0.0233                     & 0.0075           & 0.0076                       \\ \hline
\textbf{ITQ}           &  0.5988 & 0.6081 & 0.2424           & 0.2585                      & 0.0264           & 0.0302                     & 0.0088           & 0.0096                     \\ \hline
\textbf{SGH}          & 0.4713 & 0.5224  & 0.1672           & 0.1803                      & 0.0130          & 0.0137                     & 0.0061           & 0.0061                      \\ \hline
\textbf{SSDH}       & 0.4417 & 0.4698  & 0.2218           & 0.1766                     & 0.0260           & 0.0271                     & 0.0149           & 0.0160                      \\ \hline
\textbf{DistillHash}          & 0.4817 & 0.4917 & 0.2319	&	0.2331	&	0.0261	&	0.0283	&	0.0150	&	0.0165	          \\ \hline
\textbf{WGAN-AE}    & 0.5961 &  0.5972   & 0.1948           & 0.2170           & 0.0219           & 0.0269               & 0.0158           & 0.0174                     \\ \hline
\textbf{OT-AE}      & 0.6082 &  0.6117   & 0.2370           & 0.2406           & 0.0222           & 0.0273               & 0.0147           & 0.0189                      \\ \hline
\textbf{DCW-AE}     & \textbf{0.6451} &  \textbf{0.6545}     & \textbf{0.2692}  & \textbf{0.2747}   & \textbf{0.0282}  & \textbf{0.0336}  & \textbf{0.0205}          & \textbf{0.0259}          \\ \hline
\end{tabularx}
\label{tbl:precisions}
\end{table*}

\begin{itemize}[leftmargin=*]
    \item \textbf{MNIST}~\footnote{http://yann.lecun.com/exdb/mnist/}: This dataset consists of of 70,000 digit images. We randomly select 10,000 images as the query set and the remaining images for the training and retrieval sets. 
    \item \textbf{CIFAR10}~\cite{krizhevsky2009learning}: This dataset consists of 60,000 natural images categorized uniformly into 10 labels. We randomly select 1,000 images from each label for the query set and use the remaining images for the training and retrieval sets. Hence, the query set contains 10,000 images and the training/retrieval set contains the same 50,000 images.
    \item \textbf{FLICKR25K}~\cite{bui2018sketching}: This dataset consists of 25,000 social photographic images downloaded from Flickr~\footnote{https://www.flickr.com/}. There are a total of 250 different class labels. We randomly select 20 images from each label for the query set and similarly use the remaining images for the training and retrieval sets. The final query dataset contains 5,000 images and the training/retrieval set contains the same 20,000 images.
    \item \textbf{PLACE365}~\footnote{http://places2.csail.mit.edu/}: This dataset consists of 1.8 millions of scenery images organized into 365 categories (labels). We randomly select 10 images from each label for the query set and 500 images from each label for the trainin and retrieval sets. The final query dataset contains 3,650 images and the training/retrieval set contains 182,500 images. 
\end{itemize}

\subsection{Evaluation Metrics}
For evaluating the performance of the proposed model, we follow the standard evaluation mechanism that is widely accepted for the problem of image hashing
- the \textbf{precision@R} (\textbf{P@R}) and \textbf{mean average precision} (\textbf{MAP}). Given the query images, P@R and MAP are calculated as follows:

\begin{align} 
    \text{Precision}(R, q) &= \frac{\sum_{r=1}^R \delta(r,q)}{R} \\
    \text{P@$R$} &= \frac{1}{Q}\sum_{q=1}^Q  \text{Precision}(R, q) \\
    \text{AP}(q) &= \frac{1}{N_q} \sum_{r=1}^N \text{Precision}(r, q) \times \delta(r,q)
    \\
    \text{MAP} &= \frac{1}{Q} \sum_{q=1}^Q AP(q),
\end{align}
where $N$ is the size of the retrieval set, $R$ is the number of retrieved images, $N_q$ is the number of all relevant images in this set, $Q$ is the size of the query set and $\delta(r,q) = 1$ only when the $r$-th retrieved image is relevant to the query image $q$; otherwise $\delta(r,q) = 0$. A retrieved image is relevant if its ground-truth label is the same as the label of the query image. 

\subsection{Comparison Methods}

We compare the performance of the proposed method with various representative unsupervised image hashing methods.

\begin{itemize}[leftmargin=*]
    \item \textit{Locality Sensitive Hashing } (\textbf{LSH})~\cite{leskovec2014mining}: a widely-used data-independent, shallow hashing method using random projection. 
    \item \textit{Spectral Hashing} (\textbf{SpecHash})~\cite{weiss2009spectral}: an unsupervised shallow hashing method whose goal is to preserve locality while finding balanced, uncorrelated hashes by solving the Eigenvector problem.
    \item \textit{Iterative Quantization} (\textbf{ITQ})~\cite{gong2013iterative}: the state-of-the-art shallow hashing method that alternately minimizes the quantization error to achieve better hash codes.
    \item \textit{Stochastic Generative Hashing} (\textbf{SGH})~\cite{dai2017stochastic}: a representative hashing method that, similar to the proposed method, also minimizes the reconstruct loss in an autoencoder model.  
    \item \textit{Semantic Structure-based Deep Hashing} (\textbf{SSDH})~\cite{yang2018semantic}: an unsupervised deep hashing method that learns the hash function by preserving heuristically-defined semantic structure of the data. The semantic structure is extracted from a pre-trained neural network (such as VGGNet).
    \item \textit{Deep Hashing by Distilling Data Pairs} (\textbf{DistillHash})~\cite{yang2018semantic}: the state-of-the-art unsupervised deep hashing method that is, in principle, similar to SSDH. However, the semantic structure is constructed by distilling data pairs that are consistent with the Bayes optimal classifier.
    \item \textit{Wasserstein Adversarial Autoencoder} (\textbf{WGAN-AE}): adversarial autoencoder model for hashing which employs the critic that estimates the Wasserstein from the dual domain. This is an improved version of the adversarial autoencoder defined in~\cite{doan2019adversarial}.
    \item \textit{OT-Wasserstain Adversarial Autoencoder} (\textbf{OT-AE}): the adversarial autoencoder model for hashing which directly minimizes the Wasserstein distance using the OT formulation in the primal domain, (discussed in Section~\ref{sec:ot_ae}).
    \item \textit{The proposed method} (\textbf{DCW-AE})
\end{itemize}

\begin{figure*}[!ht]
 \centering
 \renewcommand{\thesubfigure}{a}
 \subfloat[MNIST: 32 bits \label{subfig-1:pr_32bits}]{%
   \includegraphics[width=0.32 \textwidth]{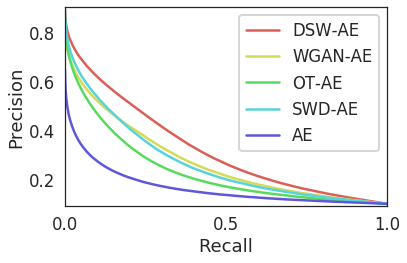}
 }
 \hfill
 \renewcommand{\thesubfigure}{b}
 \subfloat[CIFAR10: 32 bits \label{subfig-2:pr_64bits}]{%
   \includegraphics[width=0.32 \textwidth]{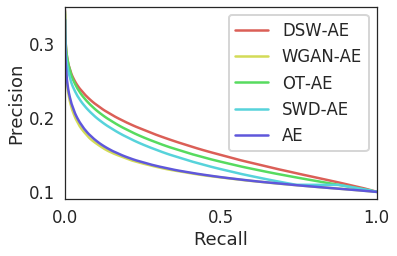}
 }
  \hfill
 \renewcommand{\thesubfigure}{c}
 \subfloat[FLICKR25K: 32 bits \label{subfig-2:pr_64bits}]{%
   \includegraphics[width=0.32 \textwidth]{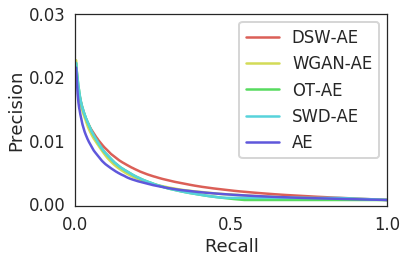}
 }
 \vfill
  \renewcommand{\thesubfigure}{d}
 \subfloat[MNIST: 64 bits \label{subfig-1:pr_32bits}]{%
   \includegraphics[width=0.32 \textwidth]{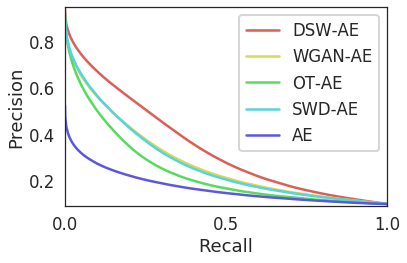}
 }
 \hfill
 \renewcommand{\thesubfigure}{e}
 \subfloat[CIFAR10: 64 bits \label{subfig-2:pr_64bits}]{%
   \includegraphics[width=0.32 \textwidth]{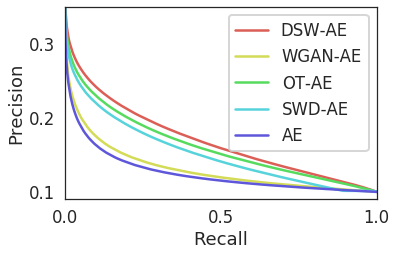}
 }
  \hfill
  \renewcommand{\thesubfigure}{f}
 \subfloat[FLICKR25K: 64 bits \label{subfig-2:pr_64bits}]{%
   \includegraphics[width=0.32 \textwidth]{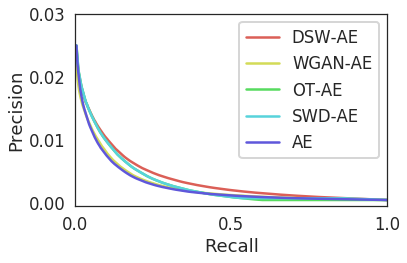}
 }
 \caption{Performance comparison of various methods using Precision-Recall curves of code lengths of 32 and 64 bits.}
 \label{fig:pr_curves}
\end{figure*}


\begin{table}[!t]
\caption{Performance comparison of different methods using MAP for $m=64$. The best MAP values are shown in bold.}
\begin{tabularx}{\columnwidth}{|l|C|C|C|C|}
\hline
Method            & {\footnotesize\textbf{MNIST}} & {\footnotesize\textbf{CIFAR10}} & {\footnotesize\textbf{FLICKR25K}} & {\footnotesize\textbf{PLACE365}} \\ \hline
\textbf{LSH}       & 0.2228 & 0.1477           & 0.0348             & 0.0101            \\ \hline
\textbf{SpecHash}  & 0.2856 &  0.1265           & 0.0535             & 0.0121            \\ \hline
\textbf{ITQ}      & 0.3121 & 0.1824           & 0.0535             & 0.0163            \\ \hline
\textbf{SGH}      & 0.2912 & 0.1372           & 0.0196             & 0.0052            \\ \hline
\textbf{SSDH}     & 0.2834 & 0.1664           & 0.0504             & 0.0107            \\ \hline
\textbf{DistillHash}    & 0.3071  & 0.1831	         & 0.0579	          &	0.0171            \\ \hline
\textbf{WGAN-AE}  & 0.3033 & 0.1775           & 0.0642             & 0.0161            \\ \hline
\textbf{OT-AE}   & 0.3431 & 0.1777           & 0.0658             & 0.0165            \\ \hline
\textbf{DCW-AE}    & \textbf{0.3901} & \textbf{0.2084}           & \textbf{0.0790}             & \textbf{0.0184}            \\ \hline
\end{tabularx}
\label{tbl:MAPs}
\end{table}

\begin{table*}[!t]
\centering
\begin{tabular}{m{0.5in} m{1.0in} m{4.5in}}
\center{\textbf{$m$}} &
\center{\textbf{Query Image}} & \textbf{\quad \quad \quad \quad \quad \quad \quad \quad \quad \quad \quad \quad Top 10 Retrieved Images}\\ \toprule
\center{32} & \center{\includegraphics[height=0.45in]{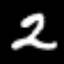}} &
\includegraphics[height=0.45in]{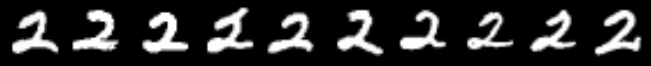} \\ 
\center{16} & \center{\includegraphics[height=0.45in]{figures/mnist/mnist-query-5.png}} & \includegraphics[height=0.45in]{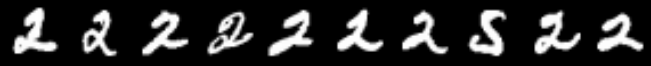} \\
\midrule

\center{32} & \center{\includegraphics[height=0.45in]{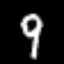}} &
\includegraphics[height=0.45in]{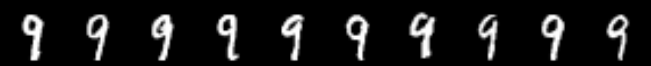} \\ 
\center{16} & \center{\includegraphics[height=0.45in]{figures/mnist/mnist-query-2.png}} & \includegraphics[height=0.45in]{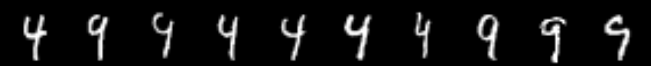} \\
\bottomrule

\end{tabular}
\caption{An illustration of the top-10 retrieved MNIST digits for a given query image using code length ($m$) of 32 and 16 bits.}
\label{tbl:queried_images_mnist}
\end{table*}


\noindent  \textbf{Implementation Details:} For the existing shallow hashing techniques, we employ a pre-trained VGG network and extract the pooling-\textit{fc7} feature vectors of the images~\cite{simonyan2014very}. For our model, we employ VGG for the feature extractor $h$ on CIFAR10, FLICKR25K and PLACE365. For MNIST, we do not use the feature extractor $h$ and directly learn to reconstruct the pixel image, i.e. $x = x_{pixel}$. The encoder/decoder are multi-layer perceptrons (MLP) with hidden layers as $1000\rightarrow1000\rightarrow500$ and $500\rightarrow1000\rightarrow1000$, respectively. We implement our method in pyTorch~\footnote{http://pytorch.org/} and train our model using Stochastic Gradient Descent (SGD) along with Adam optimizer~\cite{kingma2014adam}. We use a mini-batch size of 128 examples for DCW-AE and WGAN-AE. For OT-AE, we try different mini-batch sizes ranging from 128 to 512. For a fair and strict evaluation, we perform a grid search to find the best hyper-parameters for each of the methods; and report averaged results over five runs (three runs for PLACE365 dataset). The source code and the datasets used in our experiments will be made publicly available on a Github repository upon the acceptance of this paper.

\subsection{Performance Results}

\begin{figure*}[!t]
 \centering
 \renewcommand{\thesubfigure}{a}
 \subfloat[ITQ \label{subfig-1:pr_32bits}]{%
   \includegraphics[width=0.23 \textwidth]{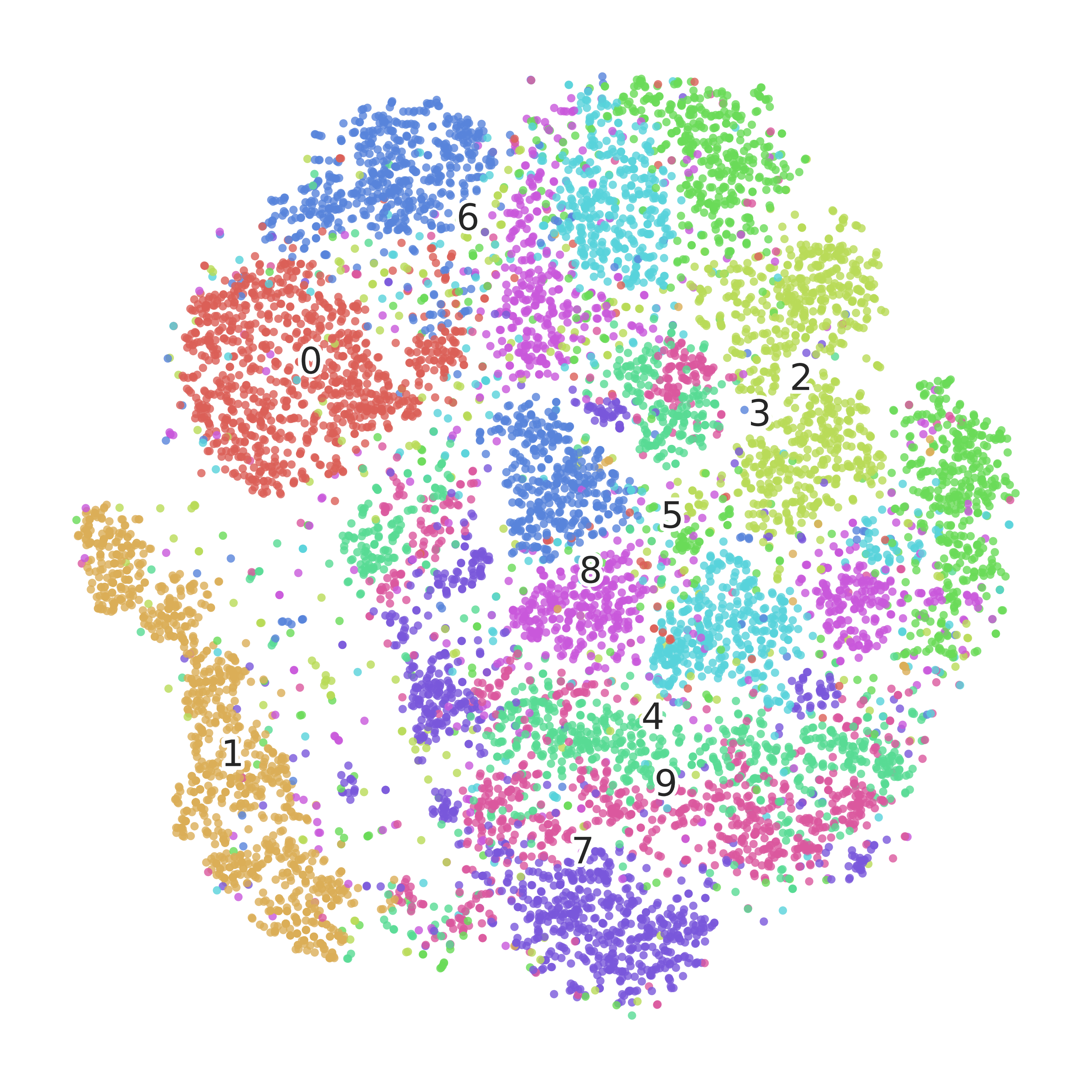}
 }
 \hfill
  \renewcommand{\thesubfigure}{b}
 \subfloat[DistillHash \label{subfig-1:pr_32bits}]{%
   \includegraphics[width=0.23 \textwidth]{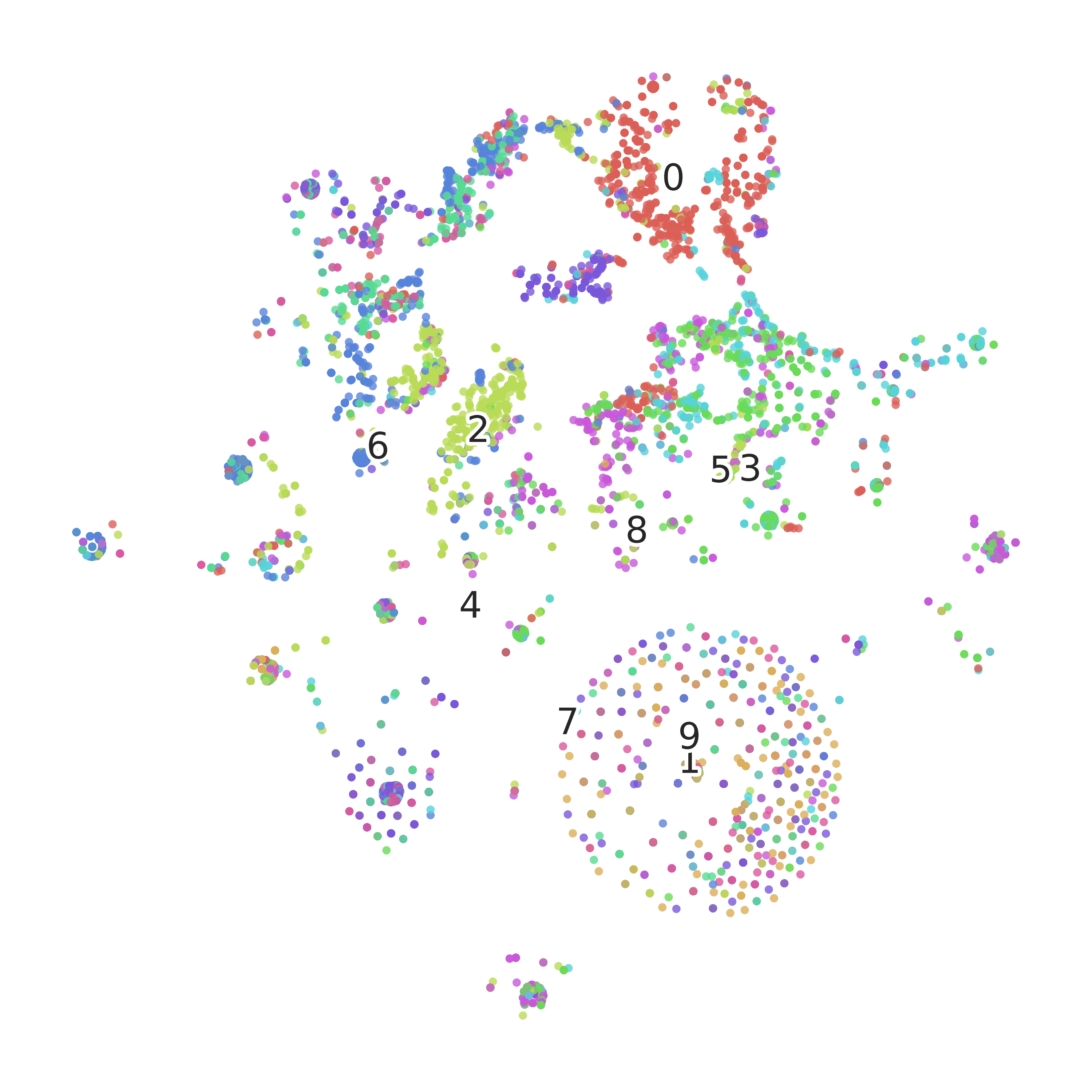}
 }
 \hfill
 \renewcommand{\thesubfigure}{c}
 \subfloat[OT-AE \label{subfig-1:pr_32bits}]{%
   \includegraphics[width=0.23 \textwidth]{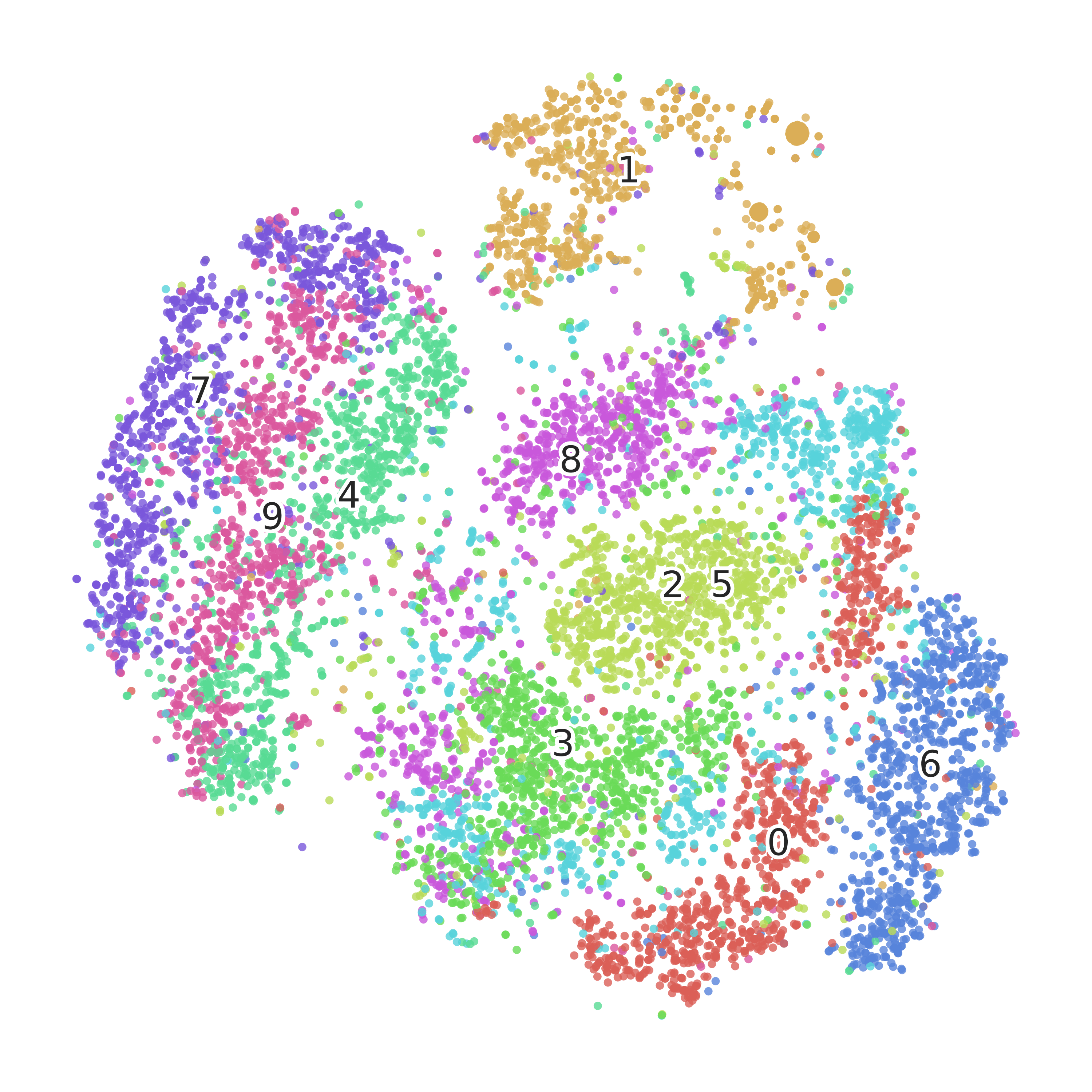}
 }
 \hfill
 \renewcommand{\thesubfigure}{d}
 \subfloat[DCW-AE \label{subfig-2:pr_64bits}]{%
   \includegraphics[width=0.23 \textwidth]{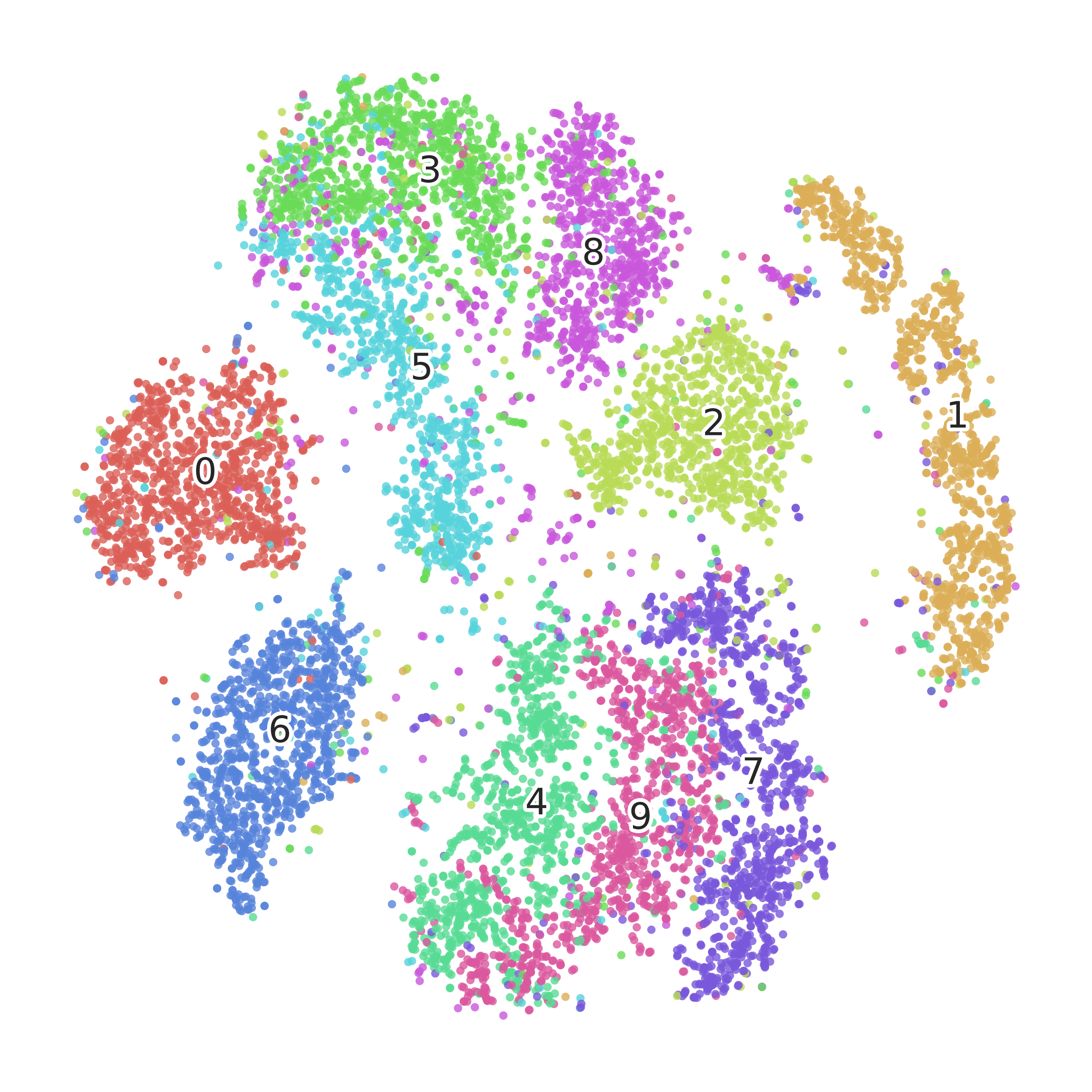}
 }
 \caption{T-SNE embedding of the generated discrete hash codes on the MNIST dataset. The digit identities are color-coded.}
 \label{fig:tsne_embedding}
\end{figure*}
In this experiment, we measure the performance of various methods for the image retrieval task. Table \ref{tbl:precisions} shows the P@1000 results across different lengths of the hash codes. DCW-AE consistently outperforms all the baseline methods at different lengths of the hash codes. Specifically, DCW-AE has a relative performance improvement of more than 10\% in CIFAR10 and FLICKR25K. Similarly, in Table~\ref{tbl:MAPs}, we report the MAP results for all the methods at different lengths of the hash codes. Again, DCW-AE consistently achieves the best MAP results. The improvements of our method over the baselines are statistically significant according to the corresponding paired t-tests (p-value $< 0.01$). 

The P@1000 and MAP results demonstrate the superiority of our method compared to various state-of-the-art approaches for the image hashing problem. One important result is the improvement in performance of DCW-AE compared to OT-AE. This supports our claim that the existing adversarial autoencoders cannot learn the optimal hash function compared to our DCW-AE model. This demonstrates the significance of a generalizable Wasserstein estimate. A lower sample-complexity estimate makes it easier for the algorithm to converge to the target distribution using mini-batch training algorithms.
\vspace{-0.09in}
\subsection{Ablation Study}

We further evaluate the effectiveness of the proposed adversarial learning procedure using an ablation study. Figure~\ref{fig:pr_curves} shows the Precision-Recall curves of the different Autoencoder-based hashing models. AE denotes the vanilla Autoencoder without any adversarial regularization. SWD-AE is the Adversarial Autoencoder whose adversarial matching cost is SWD. We observe that all adversarial-based Autoencoder models outperform AE. This demonstrates the importance of the adversarial learning for Autoencoders in image hashing. Furthermore, replacing the dual Wasserstein estimate (in WGAN-AE) and the OT estimate (in OT-AE) with our proposed estimate further improves the retrieval performance. DCW-AE's performance is significantly better than that of OT-AE and SWD-AE. We hypothesize that this improvement is primarily due to the following reasons:

\begin{itemize}[leftmargin=*]
    \item The proposed distance estimate of DCW-AE has a better sample complexity (i.e., more generalizable) than OT. Better sample complexity allows mini-batch optimization techniques such as SGD to converge better to the target distribution.
    \item The proposed distance estimate of DCW-AE is a better distance compared to SWD. Our distance estimate induces a weaker topology than SWD~\cite{arjovsky2017wasserstein}. In other words, our estimate makes it easier to converge to the target distribution. 
\end{itemize}

\subsection{Qualitative Evaluation}

In Table~\ref{tbl:queried_images_mnist}, we show the top-10 retrieved digit images of two query images corresponding to digits 3 and 9. DCW-AE method has successfully retrieved relevant digits; when the retrieved images are false positives, we can still see that they contains similar appearances (e.g. some handwritten digit 4's are similar to the digit 9). As expected, when increasing the size of the binary code) ($m$), the model makes fewer mistakes. 

Qualitatively, we can visually compare the quality of the hash codes generated by DCW-AE, OT-AE and two best non-adversarial hashing baselines, ITQ and DistillHash. Figure~\ref{fig:tsne_embedding} shows the two-dimensional t-SNE embeddings~\cite{maaten2008visualizing} of the generated hash codes on the query set. In this example, the similarity matrix of SSDH is constructed directly from the image-pixel space. Notice that SSDH generates unreliable hash codes. This shows that, without a reliable construction of the similarity matrix, the retrieval performance of both SSDH and DistillHash is significantly deteriorated. On the other hand, DCW-AE learns a very efficient discrete embedding of the original data; for example, DCW-AE can even separate the most similar digits 9 and 4.


\subsection{DCW-AE's Wasserstein Estimation}

\begin{figure}[!h]
 \centering
 \renewcommand{\thesubfigure}{a}
 \subfloat[Distance Estimates \label{subfig:wasserstein_estimates}]{%
   \includegraphics[width=0.23 \textwidth]{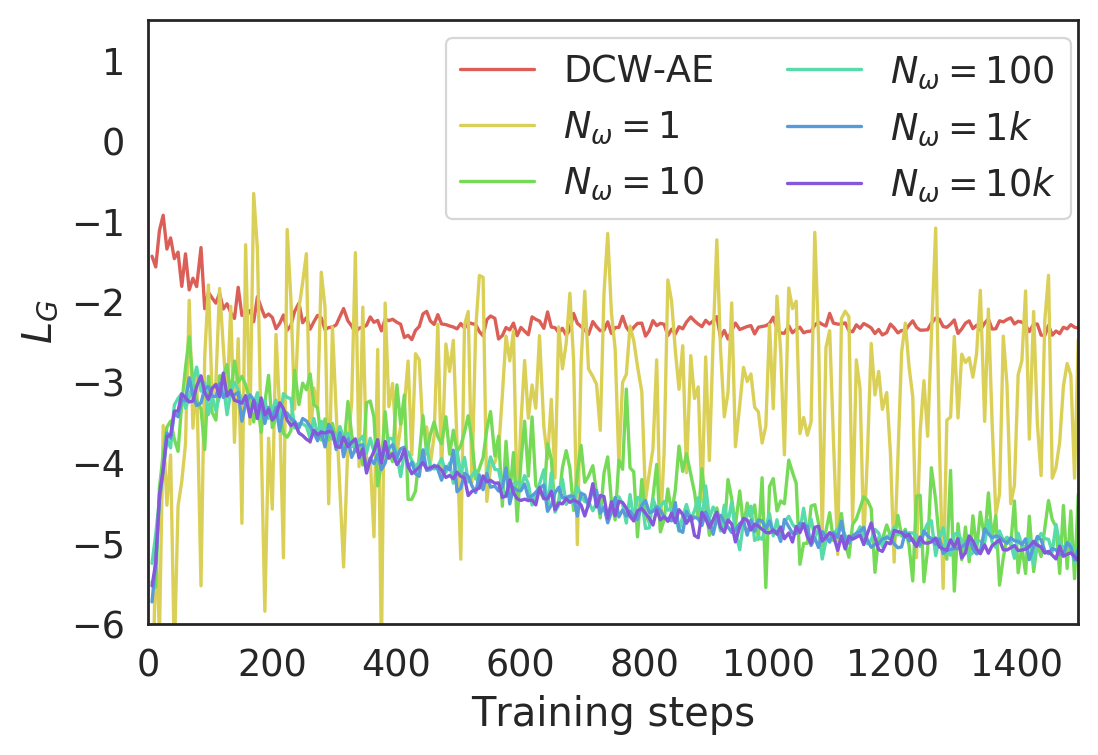}
 }
 \hfill
  \renewcommand{\thesubfigure}{b}
 \subfloat[Gradient Norm \label{subfig:gradnorm}]{%
   \includegraphics[width=0.23 \textwidth]{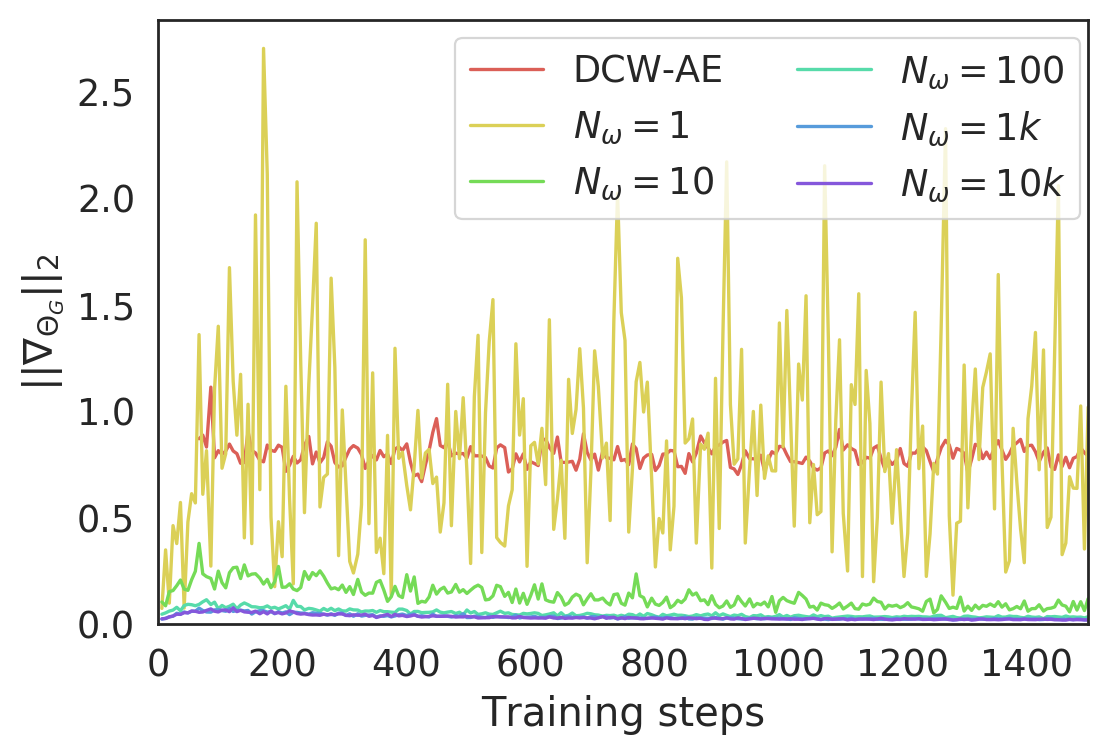}
 }
\caption{Wasserstein distance estimates (left) and the gradient norm of the encoder's parameters (right) during training of DCW-AE and SWD (with varying $N_{\omega}$ values).}
\label{fig:training_estimates}
\end{figure}

In this Section, we show the efficiency of the proposed Wassertein estimate and compare it with SWD. Figure~\ref{fig:training_estimates} shows the distance estimates and the gradient norm of the parameters ($||W_f||_2$) during training of the proposed calculation in DCW-AE and of the original SWD on the CIFAR-10 dataset. The SWD is estimated with different number of random projections $N_{\omega}$. On the extreme case, when $N_{\omega}=1$, both the distance and the gradient fluctuates siginificantly during training. This makes adversarial training become very unstable. When $N_{\omega}$ is higher, we can observe that the distance estimates of SWD are lower. This is because the one-dimensional Wasserstein distances across several random directions are very small and do not contain useful signal for training (see the corresponding gradient). Even when increasing $N_{\omega}$ from 10 to 10K, the estimate is not generally better. On the other hand, the estimates of DCW-AE are higher and its gradient is also more stable. This is due to the fact that the proposed distance estimate averages the distances of directions along which the $P_z$ and $P_b$ are most dissimilar.

\subsection{Computational efficiency of DCW-AE}
\begin{figure}[!h]
\centering
\includegraphics[width=3.2in]{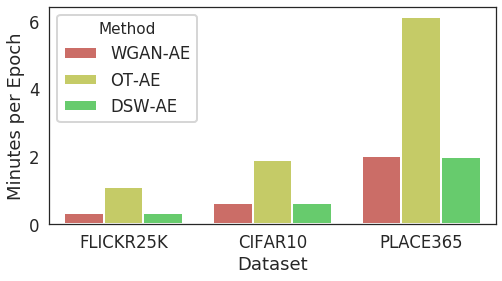}
\caption{Comparison of training times (in minutes) per one epoch on various datasets (using a mini-batch size of 128 examples).}
\label{fig:training_time}
\end{figure}

In this experiment, we compare the training time of the proposed method and the existing adversarial hashing methods. The training time of WGAN-AE, OT-AE and DCW-AE are shown in Figure~\ref{fig:training_time} for three datasets, CIFAR10, FLICKR25K and PLACE365. We report the average training time per epoch. In Figure~\ref{fig:training_time}, the training time of DCW-AE is significantly reduced compared to the training of OT-AE. 

\section{Conclusion} \label{conclusions}

We proposed a novel adversarial autoencoder model for the image hashing problem. Our model learns hash codes that preserves the locality information in the original data. Our model has a much better generalization property than the existing adversarial approaches, thus is able to achieve significant performance gains. Furthermore, the proposed model trains significantly faster than the existing Wasserstein-based adversarial autoencoders. Our experiments validate that the proposed hashing method outperforms all the existing state-of-the-art image hashing methods. Our work makes one leap towards leveraging an efficient, robust adversarial autoencoder for the image hashing problem and we envision that our model will serve as a motivation for improving other adversarially-trained hashing models. The code accompanying this paper is available on Github\footnote{\url{https://github.com/khoadoan/adversarial-hashing}}.




\balance
\bibliographystyle{ACM-Reference-Format}
\bibliography{references}


\end{document}